\newtheorem{lemma}{Lemma}
\newtheorem{proof}{Proof}
\begin{document}
\title{All-pay Auction Based Profit Maximization in End-to-End Computation Offloading System}

\author{ Hai Xue, \IEEEmembership{Member IEEE}, Yun Xia, Di Zhang, \IEEEmembership{Senior Member, IEEE}, Honghua Wei, Xiaolong Xu, \IEEEmembership{Senior Member, IEEE}
        % <-this % stops a space
\thanks{This work was supported in part by the National Natural Science Foundation of China under grant (no. 92267104), and the Natural Science Foundation of Jiangsu Province of China under grant (No. BK20211284). \emph{(Corresponding author: Xiaolong Xu.)}

Hai Xue, Yun Xia, and Honghua Wei are with School of Optical-Electrical and Computer Engineering, University of Shanghai for Science and Technology, Shanghai 200093, China (e-mail: hxue@usst.edu.cn, \{xiadayun99, whh2522175060\}@gmail.com). 

Di Zhang is with School of Electrical and Information Engineering, Zhengzhou University, Zhengzhou 450001, China; 
School of Electrical Engineering, Korea University, Seoul 02841, South Korea (e-mail: dr.di.zhang@ieee.org).

Xiaolong Xu is with School of Software, Nanjing University of Information Science and Technology, Nanjing 210044, China (e-mail: xlxu@ieee.org).
}}

\maketitle

\begin{abstract}
Pricing is an important issue in mobile edge computing. How to appropriately determine  the bid of end user (EU) is an incentive factor for  edge cloud (EC) to offer  service. In this letter, we propose an equilibrium pricing scheme based on the all-pay auction model in end-to-end collaboration environment, wherein all EUs can acquire the service at a lower price than the own value of the required  resource. In addition, we propose a set allocation algorithm to divide all the bidders into different sets according to the price, and the EUs in each set  get the service, which averts the case of  getting no service due to the low price. Extensive simulation results demonstrate that the proposed scheme can effectively maximize the total profit of the edge offloading system, and guarantee all EUs can access the service.
\end{abstract}

\begin{IEEEkeywords}
All-pay auction, edge offloading, profit maximization.
\end{IEEEkeywords}

\IEEEpeerreviewmaketitle

\section{Introduction}

\IEEEPARstart {W} {ith} the propagation of Internet of Things technologies, a large amount of tasks are generated by computation-intensive applications (e.g., virtual reality, video games, and gesture recognition) {\cite{HK}}, {\cite{XC}}, which  consumes substantial computing resources and these applications even exceeds the  computing capacity of edge devices. In order to solve this problem, edge computing is proposed, wherein computation-intensive tasks can be offloaded to the edge servers (a.k.a. end-to-edge collaboration) as well as to the vicinal resource-rich devices (a.k.a. end-to-end collaboration).   For end-to-end collaboration, due to the consumption of computing and communication resources, edge device is not  willing to actively execute tasks from other edge devices if no compensation is  provided.

Resource pricing is proposed as a promising way to tackle the issue that edge device is unwilling to execute tasks from other end users (EUs). Therefore, several recent researches have investigated pricing strategies through game theory to incentivize edge devices to provide services. Wang $et$ $al.$ {\cite{WQ}} proposed a profit maximization multi-round auction (PMMRA) scheme, which  matches buyers and sellers firstly, and then obtains bids through Vickrey auctions.   Liu $et$ $al.$  {\cite{M}}  utilized Stackelberg game  theory to model the interaction between edge clouds (ECs) and  EUs, and EC sets the price according to limited computing power to maximize its revenue. In addition to obtaining the price through game theory, traditional auction models are also utilized to decide the price. For instance, Balzer $et$ $al.$ {\cite{GQ}} investigated the dynamic bidding mechanism of the Dutch auction, and demonstrated that the Dutch auction is the most profitable one among pricing strategies.

 However, most of the existing   studies  focus on  balancing the interests of buyer and seller, ignoring that the server is unwilling to offer service without a satisfactory revenue. Furthermore,  the winner gets the service at a lower cost and the EUs who fail to bid are not punished, which results in a large number of EUs bids causing server overload while executing the matching process between EUs and servers.   Conventional auction schemes also have some problems. For example, for First-Price Sealed-Bid Auction, bidders are reluctant to submit high bids for fear that they may be asked to pay more than their actual valuation, which results in lower profits for ECs and even reluctance to provide services. 

 Taking into account the drawbacks of existing schemes,  this letter  proposes an equilibrium pricing scheme based on all-pay auction model. Different from the conventional auction mechanisms that only the highest bidder gets the bidding item and pays the bids {\cite{JF}},  it is mandatory to pay the quoted price  for all bidders. In order to  guarantee the enthusiasm of  EC, the reservation value is set as the benchmark of the auction value  which is bigger than the value of the EC resource itself. At the  meantime,  the equilibrium bid  is set to maximize the stimulation of EUs to actively offload tasks,  which is much smaller than the value of the EC resource itself  and attracts  risk lovers to participate in the auction. It is beneficial to avoid the possibility that they cannot pay for the value of the EC itself to  obtain the service  for risk lovers.  Simulation results demonstrate that the scheme can effectively enhance the profit of the edge offloading  system.

\section{System model And Problem Formulation}
 In this section, we first illustrate the system model of the proposed scheme. Subsequently, the derivation calculations of equilibrium bid of the EU as well as the reservation value of the ECs are presented. At last, the minimum bid of EUs was determined to maximize the  system profit. 
\subsection{System Model}Fig. \ref{sm} depicts the system model of the proposed scheme, wherein the symmetric independent private value (SIPV) model is adopted {\cite{SS}}. And  the system consists of three layers: EC layer, aggregation layer, and EU layer. Firstly, devices (e.g., smartphones, lab servers, PCs) that have computing power and are temporarily idle are assumed as EC. Secondly, the aggregation layer (e.g., base station,  access point) is the communication bridge between EUs and ECs. Thirdly, the EU (i.e., task owner) layer consists of different devices,\footnote{Unless stated otherwise, we use the term EC and task executor, EU and task owner, interchangeably.} such as sensors, smartphones, or PCs with diverse computational requirements.

In this letter, idle resources of mobile devices are denoted as the available  resource of edge clouds. Due to the limited energy and computing resources of task executors,   we assume that  a task executor can only  process one task in  a time unit.  Denote that each EU state \textit{C} (\textit{v}, \textit{b}, \textit{r}, \textit{m}), \textit{v} is  the valuation  based on its required  resources, \textit{b} is the equilibrium bid  obtained by its own valuation distribution function of the required resources, \textit{r} is the EU reservation value obtained by  \textit{v}. In the following, we take the average reservation value of all EUs as the reservation value of the EC. \textit{m} is a boolean function, where \textit{m} = 0 represents that EU is not served, otherwise, it is served. 

At the beginning of bidding, all EUs give the estimated \textit{v}, and  the aggregation layer device sorts all estimations after receiving them. Here, we introduce the notion of sets. That is, we assign all EUs to different sets \textit{$S_i$} (\textit{i} = 1, 2,..., \textit{k}) according to the valuation size,  the difference between the maximum value $b_{max}$ and the minimum value $b_{min}$ of the valuation in each set cannot be greater than $\epsilon$. To this end, the following equations are obtained:

\begin{equation}
     b_{max}^{S_i} - b_{min}^{S_i} < \epsilon.
\end{equation}

\begin{equation}
    \epsilon = \frac{b-a}{k}.
\end{equation}
where \textit{a} and \textit{b} are the  minimum and  maximum values among all valuations, and \textit{k} is the number of task executors, which is equal to the number of sets. Here, a null set is allowed. 

Subsequently, the aggregation layer device informs the EU with   the  \textit{n}   bidders in its set. Then,  the EU calculates the equilibrium bid \textit{b} and the reservation value \textit{v} according to the  \textit{n}  bidders and the weight number $\lambda$.\footnote{Since the number of bidders \textit{n} will serve as a reference value for the bid of the EU, we set a weight value $\lambda$ to characterize this influence.} After that, it submits the results to the aggregation layer device. The aggregation layer device calculates the average value as the reservation value of EC upon receiving the results. When the sum of the bids of all bidders in the set is greater than the reserved value, the task executor selects the service with the highest bid, and  $m$  is set to 1 accordingly. Otherwise, the task executor does not provide service. That is, the following equation is satisfied: 
\begin{equation}
    \sum_{j=0}^{n_i} b_j \geq \frac{\sum_{j=0}^{n_i} v_j}{n_i}.
    \label{eq3}
\end{equation}
\quad Here, Eq.(\ref{eq3}) represents that the sum of all EU bids must be greater than or equal to the reserved value, where $n_i$ is the number of EUs in set $S_i$.

The aggregation layer device ranks each task executor according to its computing resource size.  While processing the matching  between ECs and task owners,  the task  winner  with the largest retention value is assigned to the task executor with the largest computational resources,   the rest EU-ECs matching can also be deduced in the same manner.
\begin{figure}[tp]
\centering
\includegraphics[width=2.5in]{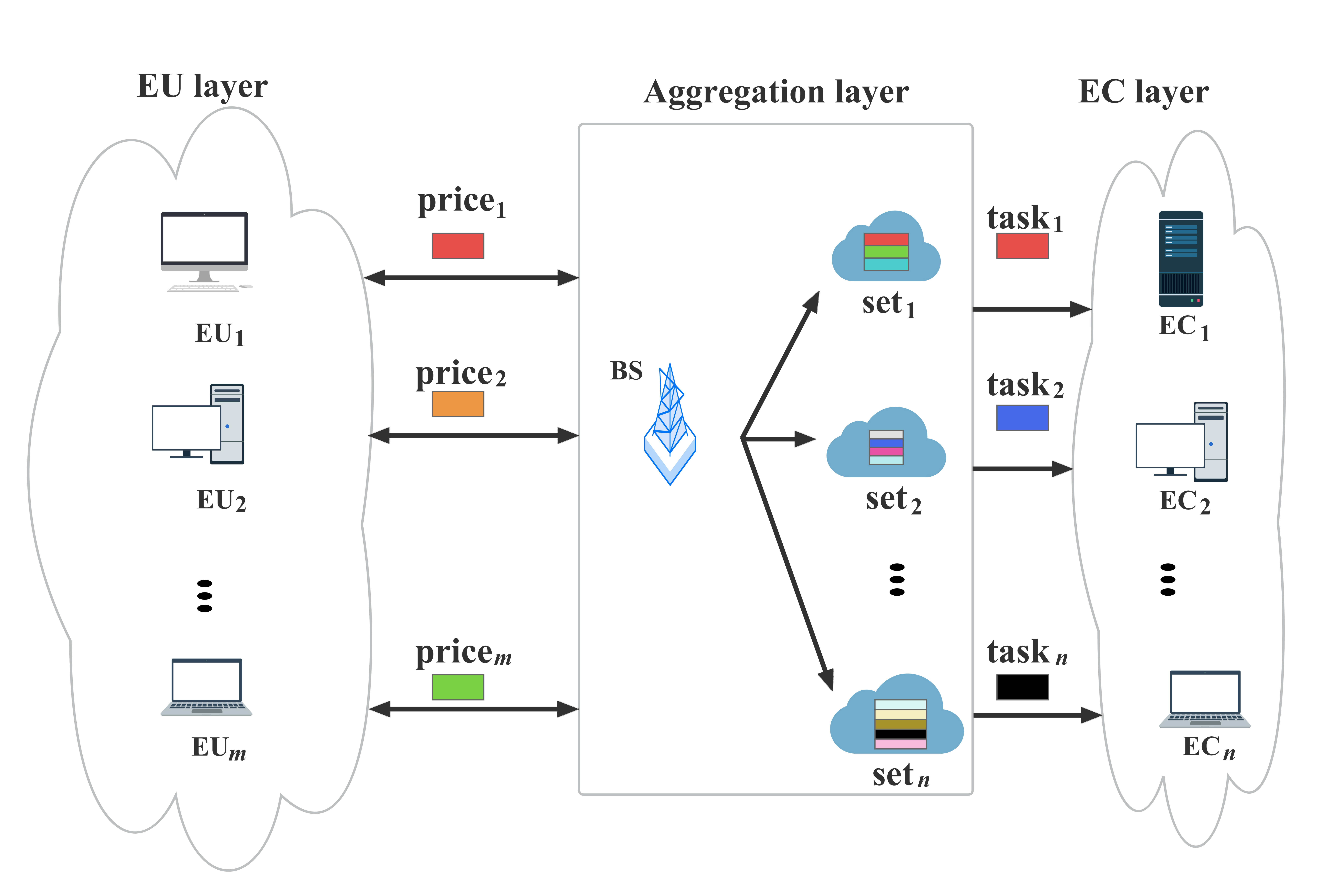}
\caption{System model.}
\label{sm}
\end{figure}

\subsection{ Equilibrium bidding of EUs}
 We define the private valuation of bidder \textit{$i$} is $v_i$, and  $v_i$ $\in$ [$v_{min}^i$, $v_{max}^i$], where $v_{min}^i$ and $v_{max}^i$ are the  lower and upper bounds of the valuation.  It intends to maximize its expected payoff, which is named as the equilibrium bid function. In general, it is assumed to be a strictly increasing function{\cite{AA}},  which implies that  the higher the  private valuation of the bidder, the higher its equilibrium offers. Denote the distribution and density functions of the private valuation \textit{$v$} as \textit{$F(t)$}, \textit{$f(t)$}, respectively. In the  all-pay auction, we  assume that for bidder \textit{i}, whose bid price   to win the auction is given as follows:

\begin{equation}
\begin{aligned}
\label{deqn_ex1a}
 P(t)&=P(v_1<t, v_2<t, \cdots, v_{n-1}<t, v_n<t)\\
 &=F^{n-1}(t).    
\end{aligned}
\end{equation}

 Bidder \textit{i}  pays the offer even if he lost the bid, so the expected revenue  is shown as follows:
\begin{equation}
\label{deqn_ex1b}
 u_{\text {all }}=v F^{n-1}(t)-(1-\lambda) b_{\text {all }}(t)-\frac{n-1}{n} \lambda b_{\text {all }}(t).
\end{equation}
where  $\lambda$  is the correlation coefficient with other EUs, and  $\lambda$ $\in$ [0, 1]. When  \textit{$t$} =  \textit{$v$}, the expected revenue is maximized (i.e., the first derivative is zero):

\begin{equation}
\left.\frac{\mathrm{d} u_{\mathrm{all}}}{\mathrm{d} t}\right|_{t=v}=0.
\end{equation}
\begin{equation}
b_{\text {all }}^{\prime}(v)=v\left[F^{n-1}(v)\right]^{\prime}.
\label{eq7}
\end{equation}
\quad When the bidder has the lowest private valuation,  it is mandatory to withdraw from the auction, that is, the offer price is 0.  Then,  combine Eq.(\ref{eq7}) with Eq.(\ref{eq3}), the following equations are obtained.
\begin{equation}
\left\{\begin{array}{lcl}
b_{\text {all }}^{\prime}(v)=v\left[F^{n-1}(v)\right]^{\prime} \\
b_{\text {all }}(v_{min}^i)=0
\end{array}\right..
\end{equation}
\quad  To sum up, the bidding equilibrium offer function is obtained as follows:
\begin{equation}
\begin{aligned}
    \mathrm{b}_{\mathrm{all}}(\mathrm{t})=\frac{\int_{v_{\min }^i}^v t d F^{n-1}(t)}{1-\frac{\lambda}{n}},\\
    \text{s.t.} \quad v \in [v_{min}^i, v_{max}^i]. 
\end{aligned}
\label{eq9}
\end{equation}
\subsection{Optimal reservation value of ECs}
In order to guarantee the  ECs benefits from the auction and arouses its enthusiasm, the proposed scheme sets a reservation value  to ensure that the ECs will not provide service when the EU bid is too low for market fairness.
For  all-pay auction, if the  reservation value of the tenderer is \textit{r}, we denote the bid of bidder \textit{i} as $b_{all}$(\textit{t}), its expected revenue $u_{all}$  is shown as follows:
\begin{equation}
u_{a l l}=(1-F(r))(\nu F^{n-1}(t)-b_{a l l}(t)).
\label{eq10}
\end{equation}
\quad When \textit{t} = \textit{v},   $u_{all}$  is maximized. Therefore, the first-order necessary conditions  are illustrated as follows:
\begin{equation}
\left.\frac{\mathbf{d}u_{all}}{\mathbf{d}t}\right|_{t=v}=0.
\end{equation}
\begin{equation}
(1-F(r))(v\begin{bmatrix}F^{n-1}(v)\end{bmatrix}'-b_{all}'(v))=0.
\end{equation}
\quad While  the private valuation of bidder \textit{i} is not higher than the reserved value,  it is mandatory to exit the auction, then  $b_{all}$(\textit{t}) = 0, and in conjunction with Eq.(\ref{eq10}), we  can obtain the following formulas.
\begin{equation}
\begin{cases}(1-F(r))(v\Big[F^{n-1}(v)\Big]^{\prime}-b_{all}^{\prime}(v))=0\\ b_{all}(r)=0\end{cases}.
\end{equation}

If the tenderer sets the reservation value as \textit{r}, the expected income based on  \textit{n} bidders by  the all-pay auction is shown as follows:

\begin{equation}
\begin{aligned}U_{a l l}&=n E(b_{all}(v))\\\\ &=n\int_{r}^{v}b_{a l}(v)f(v)\mathrm{d}v\\\\ &=\frac{n}{n-\lambda}\int_{r}^{v}n(n-1)(1-F(t))t F^{n-2}(t)f(t)\mathrm{d}t.\end{aligned}
\end{equation}

If  the private  resource valuation of the tenderer is \textit{$v_0$}, then, the expected surplus of the tenderer  with \textit{n} 
 bidders by  the  all-pay auction are shown as follows:
\begin{equation}
\begin{aligned}R_{a l l}&=U_{a l l}+v_0F^{n}(r)\\ &=\frac{n}{n-\lambda}\int_{r}^{v}n(n-1)(1-F(t))t F^{n-2}(t)f(t)\mathrm{d}t
\\&+v_0F^{n}(r).\end{aligned}
\end{equation}

Assume the optimal reservation value is \textit{r*}, then, the first-order necessary condition   is shown as follows:

\begin{equation}
\begin{array}{c}\frac{\mathrm{d}R_{all}}{\mathrm{d}r}\big|_{r=r^{\ast}}=0.\\ \end{array}
\label{eq16}
\end{equation}
\quad When the number of bidders is \textit{n},  the optimal reservation value \textit{$r^*$} can be obtained to satisfy the following expression by calculating Eq.(\ref{eq16})  with the SIPV auction model:
\begin{equation}
    v_{0}F(r^{\ast})=\frac{n}{n-\lambda}(n-1)(1-F(r^{\ast}))r^{\ast}.
\end{equation}

Since each EU in the set  calculates a reservation value \textit{$r^*$}, and \textit{$R^*$}  is defined as the average  reservation value of all EUs in the set, which is expressed as follows: 
\begin{equation}
    R^*=\frac{\sum_{n=1}^{n_i} r^*_n}{n_i}.
    \label{eq18}
\end{equation}

\subsection{Decision on  minimum valuation}
In the quest to bid and reserve value, it is found that when valuations are too low, there are low equilibrium bids and high reservation values.
To this end, we propose to set the  minimum valuation.\footnote{When it is below the minimum valuation, which implies that EUs can handle the task by itself, without offloading to ECs, and thus no bidding.} For the sake of expression simplicity, we denote $F(x)$ of bidder \textit{i} as the uniform distribution function over [0, \textit{A}]: 
 \begin{equation}
      F (x) =\frac{x}{A}.
 \end{equation}
% where \textit{A} is the maximum bidder valuation. 

According to Eq.(\ref{eq9}) and Eq.(\ref{eq16}), the equilibrium bid function and the optimal reserve value function can be transformed into the following  formulas, respectively.
\begin{equation}
    \text{$b_{all}$=}\frac{n-\lambda}{n}\int_0^v t d(\frac{t}{A})^n=\frac{(n-\lambda)(n-1)v_0^n}{n^2A^{n-1}}.
\end{equation}
\begin{equation}
    r^*=A-\frac{v_0(n-\lambda)}{n(n-1)}.
\end{equation}
\quad However, to ensure the ECs offering service, we set $v_i$ $\in$ [$v_0$, \textit{A}], when the valuation is $v_0$,  the reservation value is no more than \textit{n} times the equilibrium bid. Therefore, we obtain the following equation.

\begin{equation}
    \label{eq22}
    A-\frac{v_0(n-\lambda)}{n(n-1)} \leq \frac{(n-\lambda)(n-1)v_0^n}{nA^{n-1}}.
\end{equation}
\quad Note that the critical value of \textit{$v_0$}  is the minimum valuation that the EU should satisfy.

\subsection{Allocation of sets to EUs}
The total number of bidding EUs \textit{N} is divided into \textit{k} sets by  algorithm \ref{alg:alg1}   (see line \ref{step1} to \ref{step16} in algorithm \ref{alg:alg1}). It should be noted that when allocating the set at the beginning, the bidder cannot improve the probability of acquiring service by underestimating and overbidding.  The reason is that if the valuation is too low, it will be assigned to the set with a lower valuation, but the corresponding EC  resources will  be too low  to complete the  task. Thus, the bidder valuation  forces the EU to bid honestly. 
 In each set, if the sum of  all EU bids in the set is not larger than the reserved value of the set, all EUs in the set will not be served (see line \ref{step17} to \ref{step19} in algorithm \ref{alg:alg1}).  That is, the total profit of the system will not be generated, and it is defined as
\begin{equation}
\begin{aligned}\text{W}=\sum_{n=1}^{N'} b_n-\sum_{n=1}^S R^*_n\end{aligned}.
\end{equation}
where \textit{$N'$}  is the total number of bidders in the set being served, \textit{S}  is the number of served sets, \textit{r}  is the reserved value in each set.

\begin{algorithm}[!ht]
    \SetAlgoLined %显示end
	\caption{Set allocation}\label{alg:alg1}%算法名字
	\KwIn{The estimation \textit{v} $\in$ [$v_0$, \textit{A}] of each bidder, the price \textit{$b_i$}, the reservation value \textit{$r_i$}, the number \textit{k} of sets \textit{S},  the total number of bidding EUs \textit{N}.}%输入参数
	\KwOut{The highest $b_i$ in each set.}%输出
	Sort all the prices  in ascending order, and put them into array \textit{p}[\textit{n}]\label{step1}\; %\;用于换行
     Put \textit{p}[0] in the first set \textit{$s_1$}\;

	\For {\textit{j} = 0, \textit{j} 
 $\leq$ \textit{n}, \textit{j}++}{
        \For {\textit{i} = \textit{z}+1, \textit{i} 
 $\leq$ \textit{\textit{n}}, \textit{i}++}{
		\eIf{ \textit{p}[\textit{i}] - \textit{p}[\textit{z}] $\leq$ $\epsilon$}{
			Put \textit{p}[\textit{i}] in $s_j$\;
		}{
			\eIf{$s_j$.length \textless  3 (regulation)}{
                Delete the set $s_i$ and its elements\;
            }{
                \textit{z} = \textit{z}+1\;
            }
		}
    Put  $b_i$  into the corresponding set\;
	}
 }\label{step16}
	% \While{not at end of this document}{
	% 	if and else\;
	% 	\eIf{ \textit{p}[\textit{i}] - \textit{p}[\textit{z}] $\leq$ $\epsilon$}{
	% 		put \textit{p}[\textit{i}] in $s_j$\;
	% 	}{
	% 		\eIf{$s_j$.length \textless  3 (regulation)}{
 %                Delete the set $s_i$ and its elements\;
 %            }{
 %                \textit{z} = \textit{z}+1.
 %            }
	% 	}
	% }
	% \ForEach{condition}{
	% 	\If{condition}{
	% 		1\;
	% 	}
	% }
    Calculate each set reservation value $R^*$ according to Eq.(\ref{eq18})\label{step17}\;
    Calculate $b_{sum}$ = $\sum_{i=1}^{n}$$b_i$\;
    If $b_{sum}$ $\geq$ $R^*$, the largest bidder is selected to be served. Otherwise, it will not be served\label{step19}\;
    Set $m_i$ = 1 for all served EUs.
    \label{alg1}
\end{algorithm}

\begin{figure*}[!htbp]
\centering
% \label{Fig_2}
\subfigure[Effect  of different $\lambda$ on bid.]{
\begin{minipage}[t]{0.33\linewidth}
\centering
\includegraphics[width=2.5in]{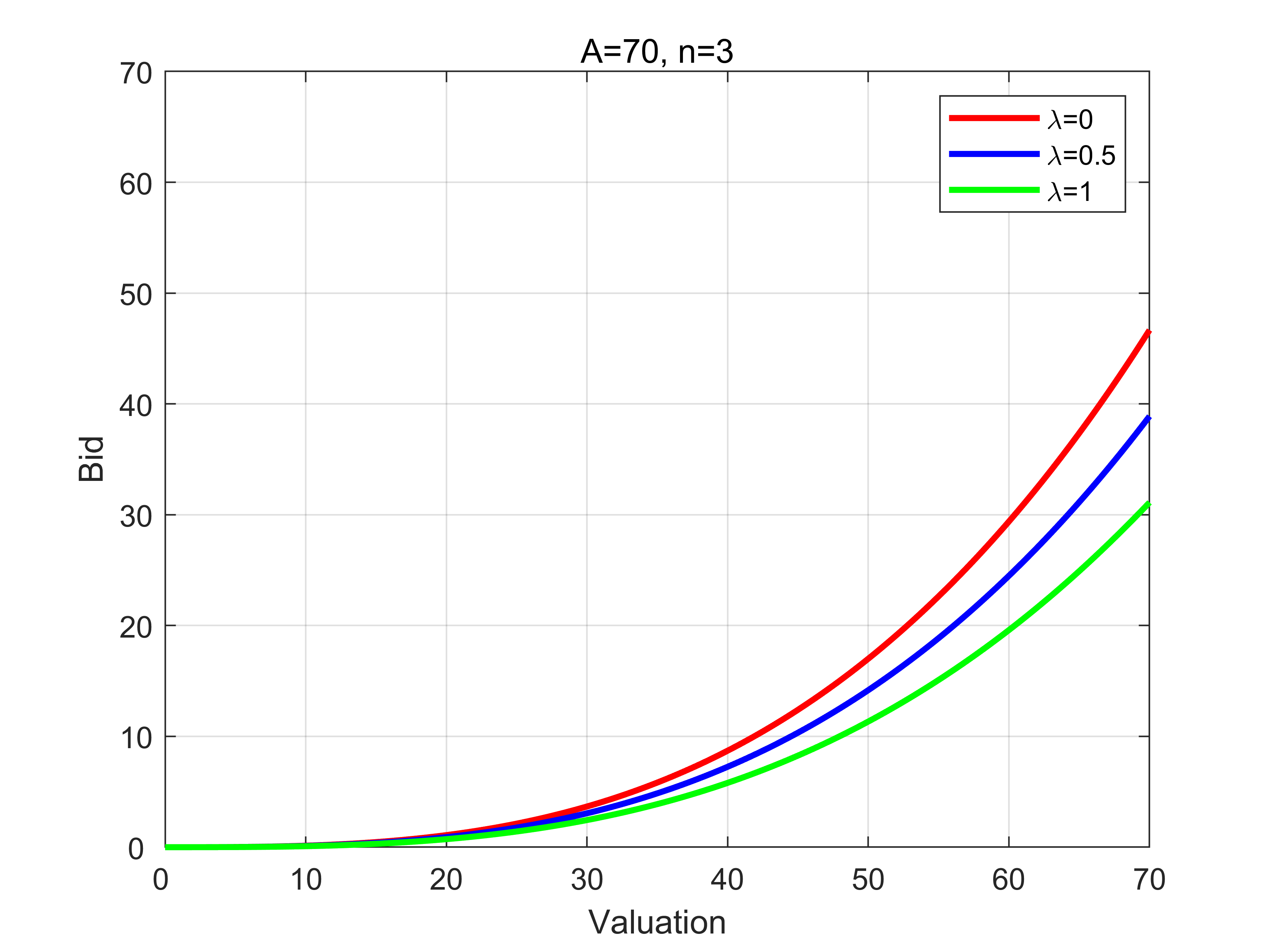}
\label{fig_2}
%\caption{fig1}
\end{minipage}%
}%
\subfigure[Effect of different \textit{$n$} on bid.]{
\begin{minipage}[t]{0.33\linewidth}
\centering
\includegraphics[width=2.5in]{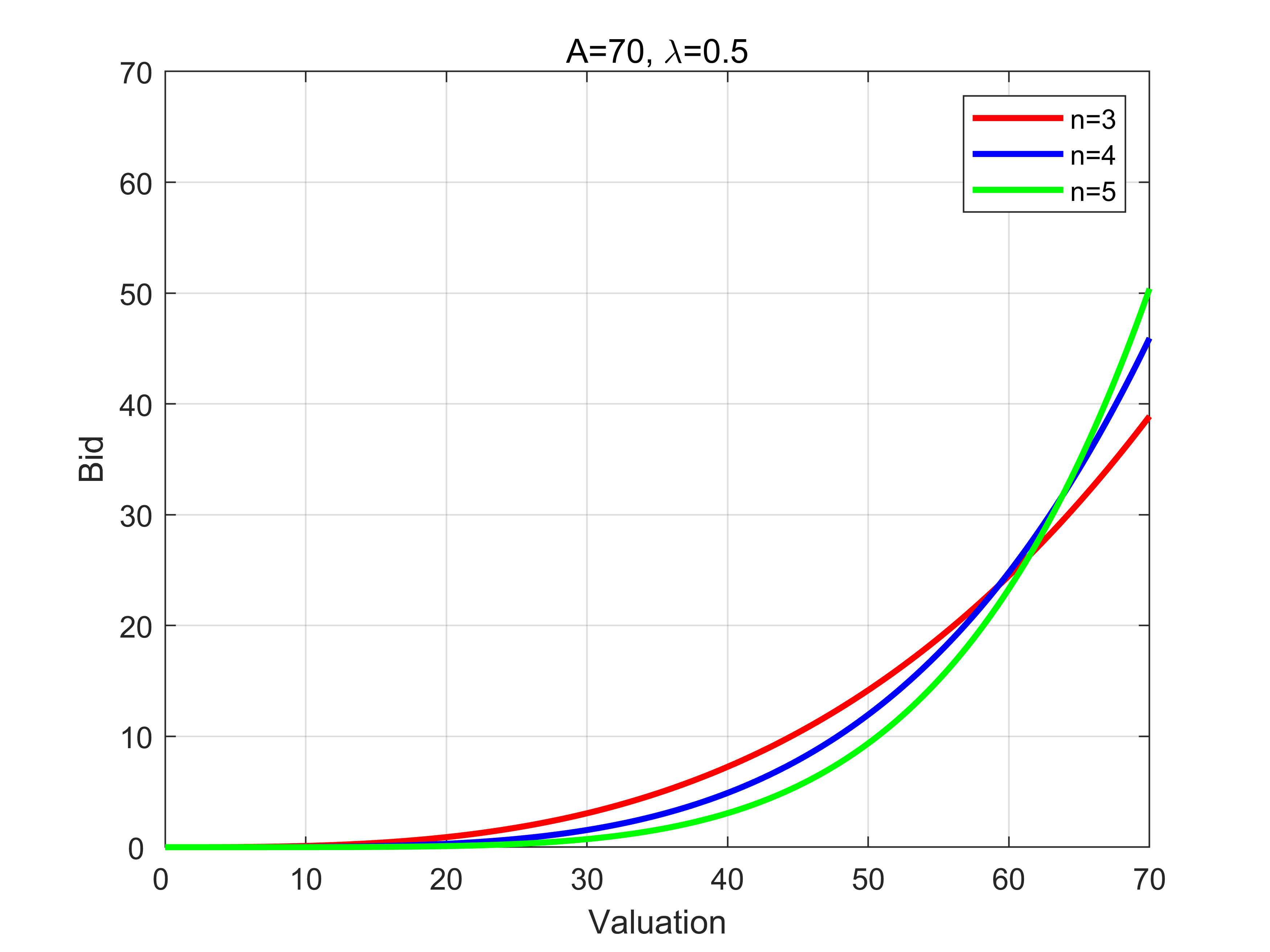}
\label{fig_3}
%\caption{fig2}
\end{minipage}%
}%
\subfigure[Effect of different \textit{A} on bid.]{
\begin{minipage}[t]{0.33\linewidth}
\centering
\includegraphics[width=2.5in]{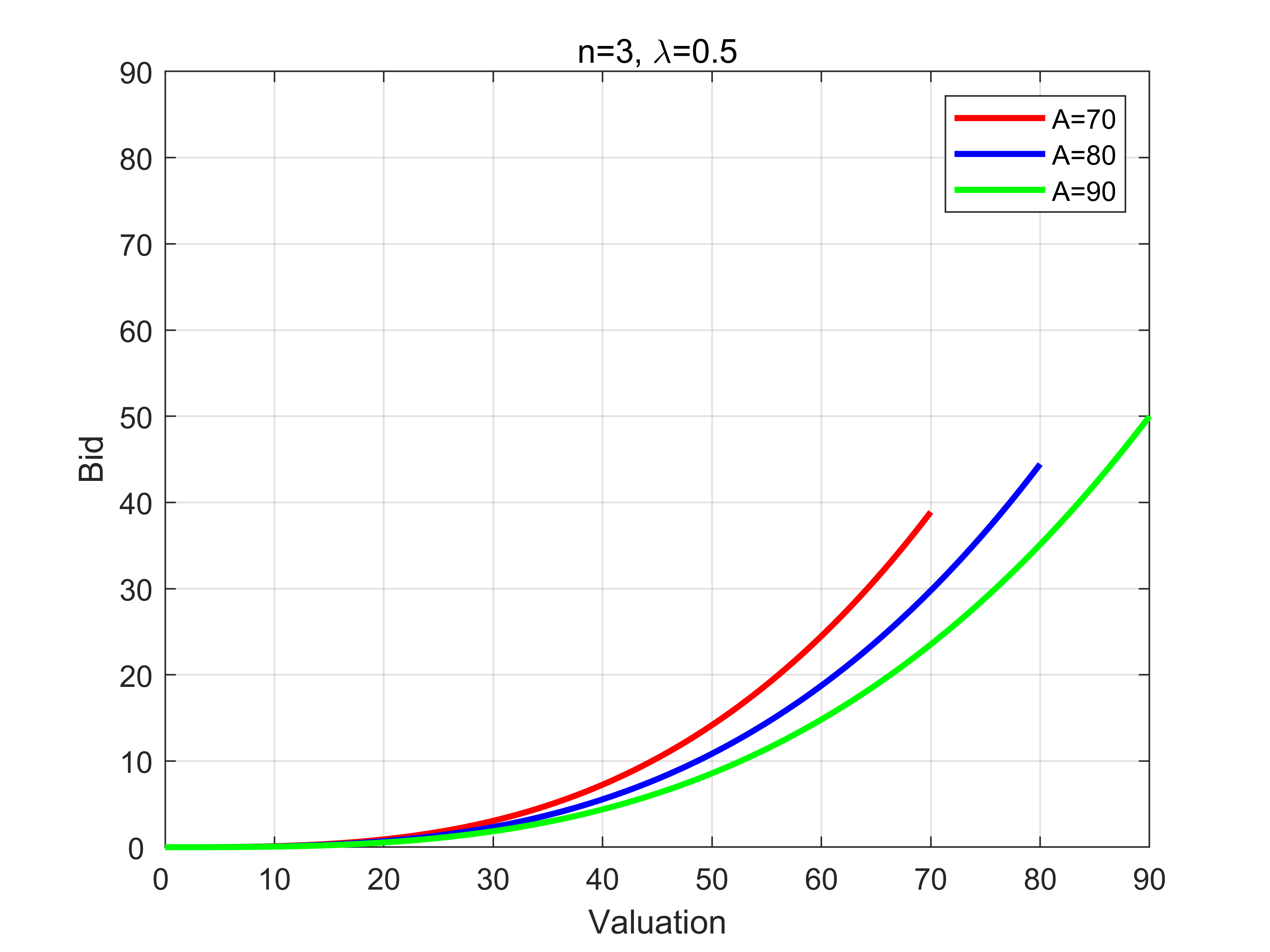}
\label{fig_4}
%\caption{fig2}
\end{minipage}
}%
\centering
\caption{Effect of different metric parameters on bid.}
\label{Fig_2}
\end{figure*}

\section{Performance evaluation}
For performance evaluation, we first evaluate different metric parameters on bid. Then, we compare the proposed scheme with 3 existing schemes in terms of the system profit: 1) Greedy algorithm, where all ECs  choose the maximum bid if their resources can serve the EU.  2) Stackelberg \cite{M}, where  EU chooses its bid according to the bid of EC. 3) PMMRA \cite{WQ}, where  EU pays the second highest price to the EC as the final payment. In addition, we set the value of uniform distribution function \textit{A} of all EUs to be 70, 80, 90 randomly, and the correlation coefficient $\lambda$ is set to 0, 0.5, 1. The number of ECs \textit{k} is set to 3, and the number of resources is quantified as 70, 80, 90 {\cite{UG}}.   For each EU, its valuation is between the respective \textit{$v_0$} and \textit{A}. 
% \begin{figure}[!ht]
% \centering
% \includegraphics[width=2.5in]{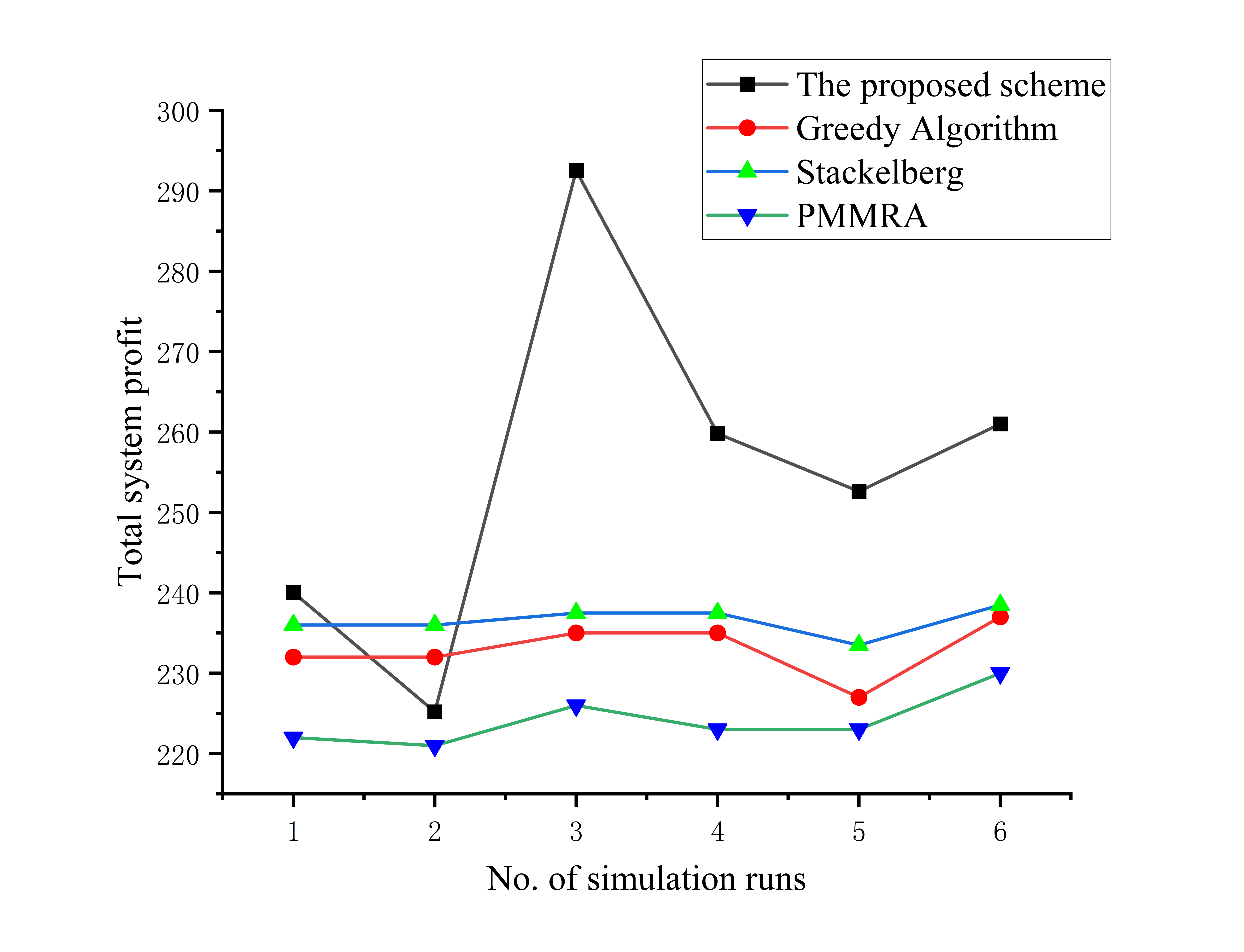}
% \caption{Comparison of total profit with existing schemes.}
% \label{fig_5}
% \end{figure}
\subsection{Bids in different situations}
Fig. \ref{Fig_2} exhibits the equilibrium bids with different parameters of the same valuation function and the equilibrium bids with the same parameters of different valuation functions.

Fig. \ref{fig_2} exhibits the results while \textit{$n$} = 3, \textit{A} = 70.    We can see that with the same valuation, the bids are getting higher as $\lambda$ increases.  This bidding trend can be directly reflected in Eq.(\ref{eq9}). Fig. \ref{fig_3} exhibits the results while \textit{$\lambda$} = 0.5, \textit{A} = 70. It is observed that when the valuation is low, the bids are lower as the number of EUs increases. When valuations are high, they instead lead to higher bids as EUs increase. For EUs, when the valuation is low   and the number of EUs is small, the winning rate is high when they bid high. On the contrary, when the valuation is high  and the number of people is large, the probability of winning is low. To this end, EUs  increase their bids to increase the winning rate. Fig. \ref{fig_4} exhibits the results while \textit{$n$} = 3, $\lambda$ = 0.5. We can see that EUs with smaller \textit{A} bid more for the same valuation they  take. The reason is that as the valuation goes higher, the EU with smaller \textit{A} gets closer to its bid ceiling, beyond which it will lose competitiveness. So it increases its bid to get the service.

% \begin{figure}[!ht]
% \centering
% \includegraphics[width=2.5in]{fl_2.png}
% \caption{Comparison of total profit with existing schemes.}
% \label{fig_5}
% \end{figure}

% \begin{figure}[!ht]
% \centering
% \includegraphics[width=8cm]{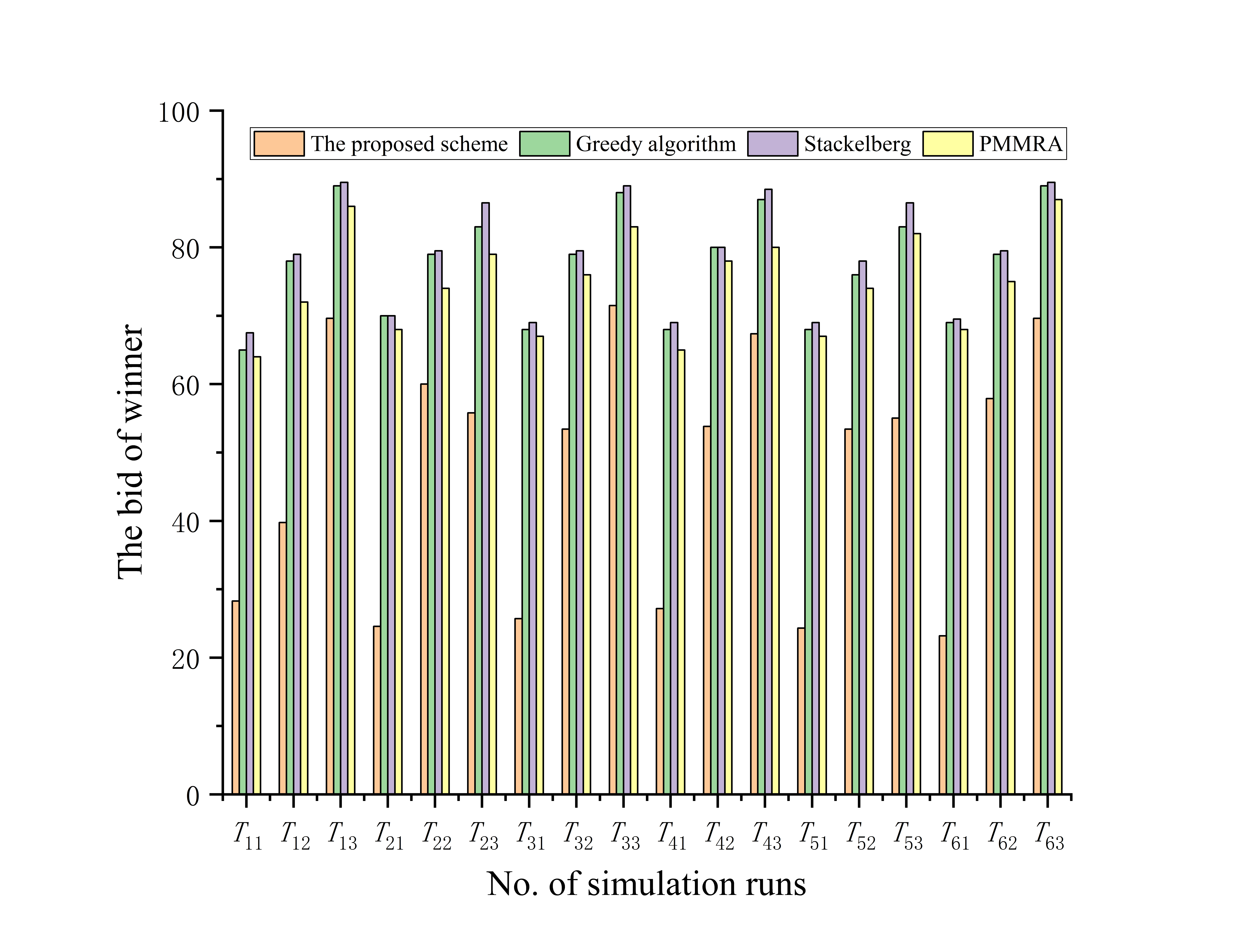}
% \caption{Bids for different
% schemes.\protect\footnotemark[4]}
% \label{fig_6}
% \end{figure}
% \footnotetext[4]{$T_{11}$ is the winner bid at $\textup{EC}_1 $ in the first trial, the rest can be speculated in the same manner.}

\subsection{The comparison of total profit with existing schemes}
 Next, we compare the proposed scheme on total system profit and the bid of winner with the following schemes: PMMRA{\cite{WQ}}, Stackelberg{\cite{M}}, and Greedy algorithm. Note that we have sorted the EUs by valuation size, so the greedy algorithm always achieves the global optimum. Denote  \textit{N} = 12, and $\lambda$ = 0.5. Fig. \ref{fig_5} shows the comparison of total profit  with existing schemes. It is observed that the proposed scheme outperforms the other three existing schemes. However, the effect is slightly worse at the second point.  This is because in the second experiment,  the number of users in each set is the same, the user competitiveness is reduced and the bid is overall lower (refer to Lemma \ref{lemma}).  So the total social profit of this experiment is relatively low, but the gap is not large.

\begin{lemma}
    \label{lemma}
      The lowest total social profit is generated when all EUs are equally divided into each set.
\end{lemma}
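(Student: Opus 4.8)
The plan is to treat the total social profit as a function of the \emph{partition profile} $(n_1,\dots,n_k)$, where $n_i=|S_i|$ and $\sum_{i=1}^{k} n_i = N$ is fixed, and to show that this function attains its minimum at the balanced profile $n_1=\cdots=n_k=N/k$. First I would write the profit as a separable sum over sets. Within a set the valuations differ by at most $\epsilon$, so to leading order every bidder in $S_i$ shares the common set size $n_i$ in the equilibrium bid of Eq.~(\ref{eq9}); substituting the uniform-distribution closed forms (the expressions for $b_{\mathrm{all}}$ and $r^{*}$ obtained just before Eq.~(\ref{eq22})) lets me collapse the contribution of a served set of size $n$ into a single scalar function
\begin{equation}
g(n)=n\,b_{\mathrm{all}}(n)-R^{*}(n),\qquad W=\sum_{i=1}^{k} g(n_i),
\end{equation}
so that the claim becomes the optimization statement that $\sum_i g(n_i)$ subject to $\sum_i n_i=N$ is minimized exactly when all $n_i$ coincide.

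Next I would reduce the statement to the convexity of $g$. For a fixed-sum vector a standard majorization (Karamata / Jensen) argument shows that if $g$ is convex then $\sum_i g(n_i)\ge k\,g(N/k)$, with equality precisely at the balanced profile, so the minimum sits at the equal division. Because $n$ is integer-valued I would run this as an \emph{exchange} (smoothing) argument rather than invoking the continuous inequality directly: pick any two sets with $n_a>n_b$, move one bidder from $S_a$ to $S_b$, and show that the induced change
\begin{equation}
\Delta W=\bigl[g(n_a-1)+g(n_b+1)\bigr]-\bigl[g(n_a)+g(n_b)\bigr]\le 0,
\end{equation}
which is exactly the assertion that the second difference $g(n+1)-2g(n)+g(n-1)$ is nonnegative, i.e.\ discrete convexity of $g$. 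Iterating this smoothing step monotonically decreases $W$ and terminates at the uniform profile, which is therefore the global minimizer.

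The remaining, and main, obstacle is verifying that $g$ is (discretely) convex on the admissible range $n\ge 2$. Writing the per-bidder bid as $b_{\mathrm{all}}(n)=\tfrac{(n-\lambda)(n-1)}{n^{2}}\,v\,(v/A)^{\,n-1}$ exposes a competition between a rational prefactor increasing toward $1$ and the geometric factor $(v/A)^{\,n-1}=e^{(n-1)\ln(v/A)}$, which (since $0<v/A\le 1$) is decreasing yet log-convex in $n$, while the reservation term $R^{*}(n)=A-\tfrac{v_{0}(n-\lambda)}{n(n-1)}$ adds a further rational piece. I would differentiate $g$ twice in the continuous relaxation and establish $g''(n)\ge 0$ for all $\lambda\in[0,1]$ and $0<v\le A$, controlling the sign by isolating the dominant log-convex exponential and bounding the rational corrections term by term, with discrete convexity then following from $g''\ge 0$. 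This sign verification, with the two factors pulling in opposite directions, is where the real work lies and is the step I expect to be most delicate; once it is in place the exchange argument closes the proof and identifies the balanced partition as the profit-minimizing configuration, matching the experimental dip observed at the second data point.
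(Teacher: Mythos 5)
Your skeleton --- write the profit as a separable sum over sets, reduce the claim to discrete convexity of the per-set contribution, and close with an exchange (smoothing) argument --- has exactly the shape of the paper's own argument: the paper compares the balanced profile $(n,n,n)$ against the one-exchange perturbation $(n-1,n,n+1)$ and shows that the relevant second difference is nonnegative, obtaining $P_2-P_1=\frac{2C\lambda^2}{(n-1-\lambda)(n-\lambda)(n+1-\lambda)}\ge 0$. The problem is that you stop exactly where the proof has to happen: you never establish the discrete convexity of your $g$, and you yourself flag it as the delicate step. Worse, for the function you actually propose to use, the claim fails in the parameter regime the paper works in. Keeping the full $n$-dependence of the uniform-distribution closed form, the per-set bid revenue is $n\,b_{\mathrm{all}}(n)=\frac{(n-\lambda)(n-1)}{n}\,v\,(v/A)^{n-1}$, which behaves like $n\,q^{\,n-1}$ with $q=v/A<1$ and is therefore \emph{concave} for $n<2/\ln(1/q)$. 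Concretely, with $A=70$, $v=60$, $\lambda=0.5$ one gets $n\,b_{\mathrm{all}}(n)\approx 38.6,\ 73.5,\ 99.2$ for $n=2,3,4$, so the second difference at $n=3$ is about $-9.2<0$ (and $2/\ln(7/6)\approx 13$, so concavity persists for all the small set sizes used in the simulations). Your exchange step then \emph{increases} $W$ rather than decreasing it, and the smoothing argument runs in the wrong direction.

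The paper escapes this because it freezes the integral $\int_{v_{\min}^i}^{v}t\,dF^{n-1}(t)$ as an $n$-independent constant $C$ (and silently drops the reservation-value sum from the comparison), so the per-set contribution collapses to $h(n)=n^{2}C/(n-\lambda)$, whose second difference is the manifestly nonnegative expression above. Whether that freezing is legitimate is a criticism one could level at the paper, but as written your proposal cannot be completed: either you adopt the same simplification, in which case the convexity check is a one-line rational-function identity rather than the asymptotic analysis you anticipate, or you retain the $(v/A)^{n-1}$ factor and the $R^{*}(n)$ term from Eq.~(\ref{eq9}) and its uniform specialization, in which case $g$ is not convex on the admissible range and the balanced partition is not the minimizer by this route. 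Note also that the paper treats the unequal-valuation case by a separate qualitative argument (about which valuations Algorithm~\ref{alg:alg1} can place in a common set), which your separable-sum framework does not subsume, since the $\epsilon$-closeness of valuations holds only within a set and not across sets.
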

\begin{proof}
Please see the Appendix.
\end{proof}

\begin{figure}[!ht]
\centering
\includegraphics[width=2.5in]{}
\caption{Comparison of total profit with existing schemes.}
\label{fig_5}
\end{figure}

% \begin{figure}[!ht]
% \centering
% \includegraphics[width=8cm]{qk_3.png}
% \caption{Bids for different schemes.\protect\footnotemark[4]}
% \label{fig_6}
% \end{figure}
% \footnotetext[4]{$T_{11}$ is the winner bid at $\textup{EC}_1 $ in the first trial, the rest can be speculated in the same manner.}

\subsection{The comparison of EU bids with existing schemes}
Fig. \ref{fig_6} shows the winner bids of different schemes. We can see that the winner bid of the proposed scheme is significantly smaller than that of the other schemes. When the total social profit is large, the proposed scheme effectively stimulates EUs to participate in the auction, so that the winner can obtain greater benefits with a smaller bid.
\begin{figure}[!ht]
\centering
\includegraphics[width=2.5in]{}
\caption{Bids for different
schemes.\protect\footnotemark[4]}
\label{fig_6}
\end{figure}
% \footnotetext[4]{$T_{11}$ is the winner bid at $\textup{EC}_1 $ in the first trial, the rest can be speculated in the same manner.}

\section{Conclusion}
In this letter, we propose a pricing scheme based on all-pay auction to stimulate EUs to offload tasks as well as maximize the total system profit.  Simulation results demonstrate the proposed scheme can make the EU with low bid get service at less than the value of the resource itself, and verify the effectiveness of the proposed scheme in terms of bid and total system profit.

\appendix
% \section{Appendix}
Denote that there are 3 sets, 3\textit{n} EUs, and the following two allocation schemes. Scheme 1: the number of EUs in each set  is \textit{n}. Scheme 2:  set the number of EUs in each set to be (\textit{n} - 1), \textit{n}, (\textit{n} + 1), respectively. Denote the total revenue (i.e., the sum of the bids of all EUs) of the two schemes as $P_1$ and $P_2$, respectively. In the following proof, we consider  \textit{A} with the same value, and the proof principle is similar when \textit{A} is different.
When all EUs valuations are the same, according to Eq.(\ref{eq9}), we  set ${\int_{v_{\min }^i}^vt d F^{n-1}(t)}$ = \textit{C}, then the following equation is obtained:
    \begin{equation}
    \begin{aligned}
    \label{eq24}
        P_2-P_1&=
        \frac{(n-1)C}{1-\frac{\lambda}{n-1}}+\frac{nC}{1-\frac{\lambda}{n}}+\frac{(n+1)C}{1-\frac{\lambda}{n+1}}-\frac{3nC}{1-\frac{\lambda}{n}}\\&=\frac{2C\lambda^2}{(n-1-\lambda)(n+1-\lambda)(n-\lambda)} \geq 0.
    \end{aligned}
    \end{equation}

If the EUs valuations are different, according to Eq.(\ref{eq22}), when \textit{A} increases,  $v_0$ will also increase. For $\textup{EU}_1$, $\textup{EU}_2$ and $\textup{EU}_3$, if $A_1$ $<$ $A_2$ $<$ $A_3$, the corresponding $v_0^1$ $<$ $v_0^2$ $<$ $v_0^3$ (or even $A_1$ $<$ $v_0^2$, $A_2$ $<$ $v_0^3$). According to Algorithm \ref{alg:alg1}, $\textup{EU}_1$ valuations are rarely  assigned in a set with $\textup{EU}_2$ valuations, and $\textup{EU}_2$ valuations are rarely  assigned in a set with $\textup{EU}_3$ valuations. That means  the set with low valuation and the set with high valuation has a lower number of EUs. According to Fig. \ref{fig_3}, $C_1 > C_2$, $C_3 > C_2$. At last, referring to Eq.(\ref{eq24}), we can obtain  that $P_1 < P_2$.

\ifCLASSOPTIONcaptionsoff
  \newpage
\fi

\footnotetext[4]{$T_{11}$ is the winner bid at $\textup{EC}_1 $ in the first trial, the rest can be speculated in the same manner.}


\begin{thebibliography}{1}

\bibitem{HK}
 X. Zhou, M. Bilal, R. Dou, J. J. P. C. Rodrigues, Q. Zhao, J. Dai and X. Xu, ``Edge Computation Offloading With Content Caching in 6G-Enabled IoV," {\em IEEE Trans. Intell. Transp. Syst.}, Early Access, 2023. 

\bibitem{XC}
 B. Li, Z. Qian, L. Liu, Y. Wu, D. Lan and C. Wu, ``Computation Offloading for Edge Computing in RIS-Assisted Symbiotic Radio Systems," {\em IEEE Trans. Netw. Sci. Eng.},  Early Access,  2023. 

\bibitem{WQ}
Q. Wang, S. Guo, J. Liu, C. Pan and L. Yang, ``Profit Maximization Incentive Mechanism for Resource Providers in Mobile Edge Computing,"  {\em IEEE Trans. Serv. Comput.}, vol. 15, no. 1, pp. 138-149,  Jan.-Feb. 2022.

\bibitem{M}
M. Liu and Y. Liu, ``Price-Based Distributed Offloading for Mobile-Edge Computing With Computation Capacity Constraints,"  {\em IEEE Wireless Commun. Lett.}, vol. 7, no. 3, pp. 420-423, Jun. 2018.



\bibitem{GQ}
B. Balzer, A. Rosato and J. von Wangenheim, ``Dutch vs. first-price auctions with expectations-based loss-averse bidders," {\em   J. Econ. Theory}, vol. 205, Art. no. 105545, Oct. 2022.

\bibitem{JF}
Y. Cherapanamjeri, C. Daskalakis, A. Ilyas, and M. Zampetakis, ``Estimation of standard auction models," {\em EC '22}, pp. 602-603, Jul. 2022.

\bibitem{SS}
S. Sareen, ``Posterior odds comparison of a symmetric low‐price, sealed‐bid auction within the common‐value and the independent‐private‐values paradigms." {\em JAE}, vol. 14, no. 6, pp. 651-657, Nov. 1999.

\bibitem{AA}
P. Choi, F. Munoz-Garcia, ``All-Pay Auctions and Auctions with Asymmetrically Informed Bidders." {\em STBE: Auction Theory}, pp 125-163, Feb. 2021.

\bibitem{UG}
U. Gneezy and R. Smorodinsky, ``All-pay auctions—an experimental study." {\em JEBO}, vol. 61, no. 2, pp. 255-275, Oct. 2006.


\end{thebibliography}
\end{document}